\newcommand{\norm}[1]{\left\lVert#1\right\rVert}
\newcommand{\bm}{\begin{bmatrix}}
\newcommand{\fm}{\end{bmatrix}}
\newtheorem{theorem}{Theorem}
\newtheorem{lemma}{Lemma}
\theoremstyle{definition}
\newtheorem{definition}{Definition}[section]
\title{On the relationships between Z-, C-, and H-local unitaries}
\author{Jeremy Cook}
\date{}
\begin{document}
\maketitle

\begin{abstract}
Quantum walk algorithms can speed up search of physical regions of space in both the discrete-time \cite{ambainis2005discrete-walk-space} and continuous-time setting \cite{childs-2004-spatial}, where the physical region of space being searched is modeled as a connected graph. In such a model, Aaronson and Ambainis \cite{aaronson-spatial-search} provide three different criteria for a unitary matrix to act locally with respect to a graph, called $Z$-local, $C$-local, and $H$-local unitaries, and left the open question of relating these three locality criteria. Using a correspondence between continuous- and discrete-time quantum walks by Childs \cite{childs2010relationship}, we provide a way to approximate $N\times N$ $H$-local unitaries with error $\delta$ using $O(1/\sqrt{\delta},\sqrt{N})$ $C$-local unitaries, where the comma denotes the maximum of the two terms.

\end{abstract}

\section{Introduction}
Spatial search \cite{aaronson-spatial-search}, quantum walks \cite{kempe2003intro}, and quantum cellular automata \cite{watrous1995one} all have the similar problem of defining what it means for a unitary matrix to act `locally' with respect to a graph. For a search algorithm acting on a physical region of space, we must take into account the fact that the speed of light is finite. As in \cite{aaronson-spatial-search}, this suggests that we model our space by a connected graph, where two points in space are connected only if a `quantum robot' (which can occupy a superposition over finitely many points) can travel from one point to the other in one time step. The time needed to search for a marked item would then critically depend on the structure of the graph. This problem is fundamentally related to the locality constraint of quantum walks on graphs, where amplitude can only be transferred between adjacent vertices in each time step. In the case of the continuous-time quantum walk, it's the governing Hamiltonian that must act locally with respect to the graph. In an attempt to relate these locality conditions, Aaronson and Ambainis provide three different locality criteria for unitary operators on graphs called $Z$-local, $C$-local, and $H$-local unitaries, which we will define in Section \ref{definitions}. In Section \ref{relations} we will discuss the known relationships between these locality criteria and methods for relating them which fail. In Section \ref{correspondence} we will review a correspondence between continuous- and discrete-time quantum walks by Childs \cite{childs2010relationship}, and show how the correspondence can be used to relate $H$-local unitaries to $C$-local unitaries.

\section{Locality Criteria}
\label{definitions}
In the classical case, a stochastic matrix $M$ acts locally on a graph $G = (V,E)$ if it does not `send' probability between two vertices which are not connected. More formally, label our vertices $V = \{j\,|\,j = \,1,...,N\}$ with edges $E = \{(j,k)\,|\,j\text{ connected to }k\}$, where we will consider every vertex connected to itself, so that $\forall j$, $(j,j)\in E$. A stochastic matrix $M$ acts locally on a graph if $M_{jk} = 0$ whenever $(j,k)\notin E$. \par
Before defining what it means for a unitary matrix to act locally, we first note a difference between discrete-time quantum walks and continuous-time quantum walks. Continuous-time quantum walks usually act directly on the state space spanned by the vertices of our graph, for example \cite{farhi1998quantum, childs-2004-spatial}. However, applying a discrete-time quantum walk on the state space spanned by the vertices of our graph doesn't always result in interesting walks. For example, let our Hilbert space be $\mathcal{H}_V = \text{span}\{\ket{j}|\,j\in \mathbb{Z}\}$. If we try to define a walk operator which performs
\begin{gather}
    \ket{n} \to a\ket{n-1} + b\ket{n} + c\ket{n+1},
\end{gather}
then it was shown in \cite{ambainis2003quantum, meyer1996quantum} that this operator is unitary if and only if one coefficient is equal to 1 and the other two are equal to 0. To resolve this issue the state space of the discrete-time quantum walk is expanded with a ``coin register'', so that the Hilbert space becomes $\mathcal{H} = \mathcal{H}_V\otimes\mathcal{H}_C$, as shown in \cite{kempe2003intro}. For this fundamental reason, discrete-time quantum walks generally need to act on a larger state space than continuous-time quantum walks. \par
Aaronson and Ambainis \cite{aaronson-spatial-search} define three different criteria for whether a unitary matrix acts locally on a graph. Informally, a unitary matrix is $Z$-local ($Z$ for zero) if it `sends' no amplitude between vertices which are not connected. A unitary matrix is $C$-local ($C$ for composability) if it can be written as a product of commuting unitaries each of which act only on a single edge. A unitary matrix is $H$-local ($H$ for Hamiltonian) if it can be obtained by applying a locally acting, low energy Hamiltonian for some fixed amount of time. By locally acting we mean that the Hamiltonian `sends' no amplitude between vertices which are not connected, i.e. the Hamiltonian is $Z$-local on the graph. \par
Formally, the state of our system at any time can be written as $\sum_{j,z} \alpha_{j,z}\ket{j,z}$ where $j\in V$ and $z$ represents the internal state of our quantum robot, which allows for extra degrees of freedom to account for a coin-register, or additionally for query results in the black-box setting. For all $z$ and $z^*$,

\begin{definition}
$U$ is $Z$-local if $\braket{j,z|U|k,z^*} = 0$ whenever $(j,k) \notin E$.
\end{definition}
\begin{definition}
$U$ is $C$-local if the basis states can be partitioned into subsets $P_1,...,P_q$ such that
\begin{itemize}
    \item[(i)] $\braket{j,z|U|k,z^*} = 0$ whenever $\ket{j,z}$ and $\ket{k,z^*}$ belong to distinct $P_m$'s,
    \item[(ii)] Each $P_m$ either contains basis states the same vertex or from two adjacent vertices.
\end{itemize}
\end{definition}
\begin{definition}
$U$ is $H$-$local$ if $U = e^{-iHt}$ for some fixed time $t$ and for some Hermitian matrix $H$ with eigenvalues of absolute value at most $\pi$, such that $\braket{j,z|H|k,z^*} = 0$ whenever $(j,k)\notin E$.
\end{definition}

Discrete-time quantum walks are written as the product of two $C$-local operations, the coin operation followed by the step operation, as in \cite{kempe2003intro, childs-2004-spatial}. Continuous-time quantum walks can be implemented with $H$-local operations. Note however that $H$-local operations do not encompass all continuous-time quantum walks because we require the eigenvalues of our governing Hamiltonian to be bounded. 

\section{Relationships between Z-, C-, and H-local unitaries}\label{relations}
\subsection{Known relationships}
\begin{theorem}
\normalfont \cite{aaronson-spatial-search} \itshape
Every $C$-$local$ unitary is $Z$-$local$.
\end{theorem}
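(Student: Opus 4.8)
The plan is to show that $C$-locality implies $Z$-locality by tracking where amplitude can flow under a $C$-local unitary. The key observation is that the definitions are set up so that $C$-locality is a strictly more structured condition: it not only forbids amplitude transfer between disconnected vertices (which is exactly $Z$-locality) but additionally requires that the allowed transfers organize into a block-diagonal structure over the partition $P_1,\dots,P_q$. So morally the result should follow by unpacking the partition condition and verifying that it rules out every matrix entry that $Z$-locality also rules out.

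**First I would** fix a $C$-local unitary $U$ with its witnessing partition $P_1,\dots,P_q$ of the basis states, and take an arbitrary pair of basis states $\ket{j,z}$ and $\ket{k,z^*}$ with $(j,k)\notin E$. The goal is to show $\braket{j,z|U|k,z^*}=0$. I would split into two cases according to whether these two basis states lie in the same block $P_m$ or in distinct blocks. If they lie in distinct blocks, then condition (i) of the $C$-local definition immediately gives $\braket{j,z|U|k,z^*}=0$, which is what we want.

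**The main case** — and I expect this to be the crux — is when $\ket{j,z}$ and $\ket{k,z^*}$ happen to lie in the same block $P_m$. Here I would invoke condition (ii): every block contains basis states from a single vertex or from exactly two adjacent vertices. Since $\ket{j,z}$ and $\ket{k,z^*}$ sit in vertices $j$ and $k$ with $(j,k)\notin E$, and $(j,j)\in E$ for every vertex by the standing convention that each vertex is self-connected, we must have $j\neq k$. But then $j$ and $k$ are two distinct, non-adjacent vertices both contributing basis states to $P_m$, which directly contradicts (ii). Hence this case cannot occur, and the only surviving possibility is the distinct-block case already handled.

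**The hard part**, to the extent there is one, is really just making sure the case analysis is exhaustive and that the self-loop convention $(j,j)\in E$ is used correctly to force $j\neq k$ before deriving the contradiction from (ii); the argument itself is a short deduction with no computation. Combining the two cases shows $\braket{j,z|U|k,z^*}=0$ whenever $(j,k)\notin E$, for all internal states $z,z^*$, which is exactly the definition of $U$ being $Z$-local. I would remark in closing that this inclusion is expected to be strict, since $Z$-locality imposes no requirement that the nonzero entries respect any edge-by-edge block decomposition.
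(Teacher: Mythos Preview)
Your proposal is correct and follows the same line as the paper: you argue that basis states at non-adjacent vertices must lie in distinct blocks of the partition (using condition (ii)), and then invoke condition (i) to conclude the matrix entry vanishes. The paper compresses this into two sentences, while you spell out the case analysis and the role of the self-loop convention more explicitly, but the underlying argument is identical.
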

\begin{proof}
If the vertices $j$ and $k$ are not connected, then their basis states belong in different partitions. Therefore $\forall z,z^*$, $\braket{j,z|U|k,z^*} = 0$.
\end{proof}
\begin{theorem}
\normalfont \cite{aaronson-spatial-search} \itshape
Every $C$-$local$ unitary is $H$-$local$.
\end{theorem}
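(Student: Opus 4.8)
The plan is to exploit the block-diagonal structure that the $C$-local partition forces on $U$, and then build $H$ block-by-block via the principal matrix logarithm. By condition (i) of the definition of $C$-local, $\braket{j,z|U|k,z^*} = 0$ whenever $\ket{j,z}$ and $\ket{k,z^*}$ lie in distinct partitions $P_m$, so $U$ is block diagonal with respect to the partition: $U = \bigoplus_{m=1}^{q} U_m$, where each $U_m$ is a unitary acting on the subspace spanned by $P_m$. Since each $U_m$ is unitary, I would diagonalize it as $U_m = \sum_\lambda e^{i\theta_\lambda} P_\lambda$, where the $P_\lambda$ are orthogonal projectors onto its eigenspaces and each phase $\theta_\lambda$ is chosen on the principal branch, $\theta_\lambda \in (-\pi,\pi]$.

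The key construction is to set $H_m = -\sum_\lambda \theta_\lambda P_\lambda$ and $H = \bigoplus_{m=1}^{q} H_m$, taking the fixed time to be $t = 1$. Each $H_m$ is Hermitian (real eigenvalues, orthogonal eigenprojectors), hence so is $H$, and by construction $e^{-iH_m} = \sum_\lambda e^{i\theta_\lambda}P_\lambda = U_m$, so that $e^{-iHt} = \bigoplus_m e^{-iH_m} = \bigoplus_m U_m = U$. This produces $H$ with the required exponential relationship to $U$ in one step.

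It then remains to verify the two side conditions in the definition of $H$-local. For the eigenvalue bound, the eigenvalues of $H$ are exactly the numbers $-\theta_\lambda$ ranging over all blocks, and since each $\theta_\lambda \in (-\pi,\pi]$ we get $|{-\theta_\lambda}| \le \pi$, as required. For $Z$-locality of $H$, I would argue directly from the partition: because $H$ is block diagonal with respect to $P_1,\dots,P_q$, we have $\braket{j,z|H|k,z^*} = 0$ whenever $\ket{j,z}$ and $\ket{k,z^*}$ belong to distinct $P_m$'s; and by condition (ii) every $P_m$ contains basis states from a single vertex or from two adjacent vertices, so any pair $j,k$ sharing a partition satisfies $(j,k)\in E$. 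Hence $\braket{j,z|H|k,z^*}=0$ whenever $(j,k)\notin E$, which is exactly $Z$-locality of $H$.

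The construction itself is routine; the only point requiring care is the interplay between the eigenvalue bound and the choice of branch. The $\pi$ bound is precisely why the principal logarithm is the right choice, and I expect the main (mild) obstacle to be confirming that no block forces a phase outside $(-\pi,\pi]$ — which is automatic since every unitary eigenvalue can be written with such a phase — so that the eigenvalue constraint and $Z$-locality can be satisfied simultaneously rather than trading off against each other.
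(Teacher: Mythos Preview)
Your proposal is correct and follows essentially the same route as the paper: exploit the block-diagonal decomposition $U=\bigoplus_m U_m$ induced by the $C$-local partition, take the principal logarithm of each block to get $H_m$ with eigenvalues in $[-\pi,\pi]$, and reassemble $H=\bigoplus_m H_m$ (the paper phrases the reassembly as a product of commuting $U_m=e^{-iH_m}$ rather than a direct sum, but this is purely notational). If anything, your write-up is slightly more complete, since you explicitly verify the $Z$-locality of $H$ via condition~(ii), which the paper leaves implicit.
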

\begin{proof}
By the spectral theorem, every unitary matrix $U$ can be written as $U=e^{-iH}$ for some Hermitian matrix $H$ with eigenvalues of absolute value at most $\pi$. $U$ is normal, so it can be diagonalized as $U = VDV^{\dagger}$. The eigenvalues of a unitary matrix have absolute value at most 1, so we can find real $\theta_k$ (with $|\theta_k| \leq \pi$) such that $e^{-i\theta_k} = D_{kk}$. Define the diagonal matrix $\Lambda_{kk} = \theta_k$, then $U = Ve^{-i\Lambda}V^{\dagger} = e^{-iV\Lambda V^\dagger}$. $\Lambda$ is a real matrix, so $V\Lambda V^\dagger$ is some Hermitian matrix $H$. So we can write each unitary $U_m$ acting on the basis states in $P_m$ as $e^{-iH_m}$. All the $U_m$ commute, therefore
\begin{gather}
    \prod_{m=1}^q U_m = e^{-i\sum_{m=1}^q H_m}.
\end{gather}
\end{proof}

\subsection{Attempted relationships}
To relate $H$-local and $Z$-local unitaries we might wonder if we could modify $U = e^{-iHt}$ in some way such that $U$ would be $Z$-local. We could use the Lieb-Robinson bounds to bound the terms in the matrix $\braket{j|e^{-iHt}|k}$ for $(j,k)\notin E$, corresponding to vertices which are `far apart' in the sense that no signal should be able to propagate between the two vertices in one time step. The Lieb-Robinson bounds state that the amount of correlation between two observables on disjoint regions of space decays exponentially in distance, so we could then find some time $t$ for every $\epsilon > 0$ such that the corresponding matrix elements $\lvert\braket{j|e^{-iHt}|k}\rvert \leq \epsilon$. After bounding these terms, we would define the new matrix $U'$ which is equal to $U$ but with the elements $U'_{jk}$ set to 0 for $(j,k)\notin E$. Then we could apply the Gram-Schmidt orthogonalization procedure to make our matrix unitary again. However, in general this procedure fails because the constraints of unitarity on $U'$ are too strong. Therefore without a canonical way of extending our state space this procedure fails to approximate $H$-local unitaries well with $Z$-local unitaries.\par
Can we use trotterization to relate $H$-local to $Z$-local or $C$-local unitaries? Suppose our Hamiltonian can be written as the sum of local Hamiltonians $H_m$, where each $H_m$ acts on the subset of vertices $n_m$. We can find a sequence of unitary operations which give an approximation for the $H$-local unitary by trotterization,
\begin{gather*}
    e^{-it\sum_{m}H_m} \approx \left(\prod_m e^{-itH_m/n}\right)^n.
\end{gather*}
If each Hamiltonian $H_m$ acts only on the basis states $n_m$, then each unitary operator $e^{-itH_m}$ also acts only on the vertices $n_m$. However, $e^{-itH_m}$  does not necessarily satisfy the locality constraint imposed by the underlying graph. Therefore it is not generally true that $e^{-itH_m}$ is $Z$-$local$ on the set of $n_m$ qubits.\par
In terms of relating $Z$-local to $C$-local or $Z$-local to $H$-local unitaries I was not able to derive any nontrivial results.

\section{Correspondence between continuous- and discrete-time quantum walks}
\label{correspondence}
In order to analyze the relationships between $H$-$local$ and $C$-$local$ operations we use a correspondence between continuous- and discrete-time quantum walks derived by Childs \cite{childs2010relationship}. The correspondence goes as follows. \par
Let $H$ be a Hermitian matrix acting on the orthonormal basis $\{\ket{j}\, |\, j=1,...,N\}$ spanned by our vertices, where $H$ is $Z$-$local$ on the graph. Let $\text{abs}(H) = \sum_{j,k}\lvert H_{jk}\rvert\ket{j}\bra{k}$, and let $\ket{d} = \sum_{j=1}^N d_j\ket{j}$ be the principal eigenvector $\text{abs}(H)\ket{d} = \norm{\text{abs}(H)}\ket{d}$, where without loss of generality we can assume all the coefficients $d_j$ are positive. Let $S$ be the swap operation $S\ket{j,k} = \ket{k,j}$ and define a set of $N$ quantum states $\{\ket{\psi_j}\in\mathbb{C}^N\times\mathbb{C}^N\,|\, j = 1,...,N\}$:
\begin{gather}
    \ket{\psi_j} = \frac{1}{\sqrt{\norm{\text{abs}(H)}}}\sum_{j=1}^N \sqrt{H_{jk}^*\frac{d_k}{d_j}}\ket{j,k}.
\end{gather}
A discrete-time quantum walk corresponding to $H$ is obtained by reflecting about the span of $\{\ket{\psi_j}\}$ and applying $S$. Let
\begin{gather}
    T = \sum_{j=1}^N \ket{\psi_j}\ket{j},
\end{gather}
be the isometry mapping $\ket{j}\in\mathbb{C}^N$ to $\ket{\psi_j}\in\mathbb{C}^N\times\mathbb{C}^N$ so that $TT^\dagger$ is the projector onto $\text{span}\{\ket{\psi_j}\}$, and $2TT^\dagger - 1$ is a reflection about the $\text{span}\{\ket{\psi_j}\}$. To simulate the continuous-time walk by the discrete-time walk given an initial state $\ket{\varphi} \in \mathbb{C}^N$:
\begin{itemize}
    \item[1.] Apply the isometry $T$ followed by the unitary operator $\frac{1+iS}{\sqrt{2}}$.
    \item[2.] Apply $\tau$ steps of the discrete quantum walk operator: $U = iS(2TT^\dagger - 1)$.
    \item[3.] Project onto the basis states $\{\frac{1+iS}{\sqrt{2}} T\ket{j}\,|\,j=1,...,N\}$.
\end{itemize}
Letting $h=\norm{H}/\norm{\text{abs}(H)}$, Childs showed \cite[Theorem 2]{childs2010relationship} that for $\tau=t\norm{\text{abs}(H)}$, the discrete-time quantum walk described above satisfies
\begin{gather}
    \norm{T^\dagger \frac{1-iS}{\sqrt{2}}(U)^\tau \frac{1+iS}{\sqrt{2}}T - e^{-iHt}} \leq h^2\left(1 + \left(\frac{\pi}{2}-1\right)h\tau\right).
    \label{eqn:approx}
\end{gather}
\par \ 
\begin{theorem}\label{thm:q}
The discrete-time quantum walk described above can be modified to be implemented with $O(\tau)$ $C$-local operations.
\end{theorem}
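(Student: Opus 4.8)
The plan is to examine the two ingredients of the walk operator $U = iS(2TT^\dagger - 1)$ separately and to argue that each can be realized --- after a harmless modification --- by a $C$-local unitary, so that each of the $\tau$ steps costs only a constant number of $C$-local operations and the whole walk costs $O(\tau)$.

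First I would observe that the reflection $2TT^\dagger - 1$ is already $C$-local. Each $\ket{\psi_j}$ is supported entirely on basis vectors $\ket{j,k}$ whose first register equals $j$, so the projector $TT^\dagger = \sum_j \ket{\psi_j}\bra{\psi_j}$ is block diagonal with respect to the first register: it acts as $\ket{\psi_j}\bra{\psi_j}$ inside the subspace $V_j = \mathrm{span}\{\ket{j,k} : k = 1,\dots,N\}$ and mixes nothing between distinct $V_j$. Hence $2TT^\dagger - 1$ is a direct sum of reflections, one on each $V_j$, and taking the partition $\{V_j\}_j$ --- each block consisting of basis states at the single vertex $j$ --- shows it is $C$-local.

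The obstacle is the swap $S$, which is not $C$-local: it moves amplitude $\ket{j,k} \to \ket{k,j}$ between vertices $j$ and $k$ for every pair, including non-adjacent ones. The key point is that we are free to redefine $S$ away from the subspace the walk actually explores. Concretely, I would replace $S$ by the operator $S'$ that swaps $\ket{j,k} \leftrightarrow \ket{k,j}$ only when $(j,k) \in E$ and acts as the identity on every $\ket{j,k}$ with $(j,k) \notin E$. This $S'$ is $C$-local via the partition into edge-pairs $\{\ket{j,k},\ket{k,j}\}$ for $(j,k)\in E$ (two adjacent vertices) together with singletons $\{\ket{j,k}\}$ for $(j,k)\notin E$ (a single vertex). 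It then remains to check that $S'$ agrees with $S$ wherever it matters. Because $\ket{\psi_j}$ is supported on $\{\ket{j,k} : H_{jk}\neq 0\} \subseteq \{\ket{j,k}:(j,k)\in E\}$, the subspace $\mathcal{A} = \mathrm{span}\{\ket{\psi_j},\,S\ket{\psi_j}\}_j$ is supported entirely on edge-pairs, on which $S' = S$; moreover $U$ preserves $\mathcal{A}$ (the reflection maps $\mathcal{A}$ into itself since $TT^\dagger\mathcal{A}\subseteq\mathrm{span}\{\ket{\psi_j}\}\subseteq\mathcal{A}$, and $S\mathcal{A}=\mathcal{A}$), and the initial state $\tfrac{1+iS}{\sqrt 2}T\ket{\varphi}$ lies in $\mathcal{A}$. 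Hence running the walk with $S'$ in place of $S$ produces exactly the same states, and each step $iS'(2TT^\dagger-1)$ is a product of two $C$-local operations.

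Finally I would account for the remaining pieces. The maps $\tfrac{1\pm iS'}{\sqrt 2}$ are $C$-local by the same edge-pair partition used for $S'$, and the isometry $T$ --- the one genuinely non-unitary object --- can be realized by a $C$-local coin-preparation unitary $W$ acting block diagonally within each $V_j$ (sending a fixed reference coin state at vertex $j$ to $\ket{\psi_j}$), with $T^\dagger$ realized by $W^\dagger$ followed by projection of the coin register. Collecting everything, the $\tau$ walk steps contribute $2\tau$ $C$-local operations and the preparation/reflection/projection wrapper contributes $O(1)$ more, for a total of $O(\tau)$. I expect the main subtlety to be the invariant-subspace argument of the previous paragraph: one must verify carefully that the walk never produces amplitude on a non-edge pair $\ket{j,k}$, since that is precisely where the modified $S'$ is allowed to differ from $S$.
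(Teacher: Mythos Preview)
Your proposal is correct and follows the same high-level strategy as the paper: replace the global swap $S$ by a conditional swap (the paper calls it $Q$, identical to your $S'$), and observe that the reflection $2TT^\dagger-1$ is already $C$-local because $TT^\dagger$ is block diagonal in the first register.

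The one genuine difference is in how you justify that the replacement $S\to S'$ leaves the walk unchanged. The paper does a direct basis-state calculation: it shows that $V^2=W^2$ on every $\ket{j,k}$, where $V=iQ(2TT^\dagger-1)$ and $W=iS(2TT^\dagger-1)$, by splitting into the cases $(j,k)\in E$ (where $Q$ and $S$ coincide) and $(j,k)\notin E$ (where $H_{jk}=0$ forces $(2TT^\dagger-1)\ket{j,k}=-\ket{j,k}$ and a short computation gives $V^2\ket{j,k}=W^2\ket{j,k}=-\ket{j,k}$). This global identity then yields equality of the full walk expressions for even $\tau$. Your invariant-subspace argument is cleaner: you identify $\mathcal{A}=\mathrm{span}\{\ket{\psi_j},S\ket{\psi_j}\}$, note it is supported on edge pairs (so $S'|_{\mathcal{A}}=S|_{\mathcal{A}}$), check it is preserved by both the reflection and $S$, and observe the initial state lies in it. This gives step-by-step agreement and hence works for all $\tau$, not just even $\tau$. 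You also handle the isometry $T$ more explicitly by realizing it via a block-diagonal coin-preparation unitary, whereas the paper simply remarks that $T$ is $C$-local ``in the sense that'' $\braket{j^*,k|T|j}=0$ for $(j,j^*)\notin E$.
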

First we note that the swap operation only acts locally on complete graphs, for if $(j,k)\notin E$, then $S\ket{j,k} = \ket{k,j}$ transfers amplitude between two vertices which are not connected. Therefore $S$ is not $C$-local or even $Z$-local. However, we can replace $S$ by the conditional swap operator, where basis states are only swapped if they share an edge in the graph:
\begin{gather}
    Q\ket{j,k} = \begin{cases}
        \ket{j,k} & \text{if } \text{ and } (j,k)\notin E,\\
        \ket{k,j} & \text{otherwise}.
    \end{cases}
\end{gather}

The modified discrete-time walk can be broken down as follows:
\begin{gather}
    \underbrace{\vphantom{\frac{1-iQ}{\sqrt{2}}} T^\dagger}_{a} \bigg(\underbrace{\frac{1-iQ}{\sqrt{2}}}_{b}\bigg) ( \underbrace{\vphantom{\frac{1-iQ}{\sqrt{2}}} iQ}_{c} (\underbrace{\vphantom{\frac{1-iQ}{\sqrt{2}}} 2TT^\dagger - 1}_{d}) )^\tau \bigg(\underbrace{\frac{1+iQ}{\sqrt{2}}}_{e} \bigg) \underbrace{\vphantom{\frac{1-iQ}{\sqrt{2}}} T}_{f}.
    \label{eqn:q}
\end{gather}

The operations $a,b,c,d,e,f$ are $C$-local operations and for even powers of $\tau$ the product $abcdef$ is equal to the walk described in (\ref{eqn:approx}). For completeness we provide these proofs in the Appendix.\par
In order to make the approximation in (\ref{eqn:approx}) arbitrarily good, Childs implements a \textit{lazy quantum walk} in order to make $h = \norm{H}/\norm{\text{abs}(H)}$ arbitrarily small. To do this, construct a set of $N$ orthogonal states $\{\ket{\bot_j}\,|\,j=1,...,N\}$ that satisfy
\begin{gather}
    \braket{\psi_j|\bot_k} = \braket{\psi_j|S|\bot_k} = \braket{\bot_j|S|\bot_k} = 0\text{ for } j,k = 1,...,N,
\end{gather}
modify the states $\ket{\psi_j}$ to
\begin{gather}
    \ket{\psi_j^\epsilon} = \sqrt{\epsilon}\ket{\psi_j} + \sqrt{1 - \epsilon}\ket{\bot_j}.
\end{gather}
and modify the isometry $T$ to $T_\epsilon = \sum_j\ket{\psi_j^\epsilon}\bra{j}$. Using the fact that $\braket{\psi_j^\epsilon|S|\psi_k^\epsilon} = \epsilon\braket{\psi_j|S|\psi_k}$, it was shown by Childs that $h$ in (\ref{eqn:approx}) is replaced by $\epsilon h$. Choosing $\epsilon = \norm{\text{abs}(H)}t/\tau$, the accuracy of the simulation can be increased by increasing the number of steps in the discrete-time walk. To achieve error $\delta$ for any initial state $\ket{\varphi}\in\mathbb{C}^N$,
\begin{gather}
    \norm{\left(T^\dagger \frac{1-iS}{\sqrt{2}}(U)^\tau \frac{1+iS}{\sqrt{2}}T - e^{-iHt}\right)\ket{\varphi}} \leq \delta,
\end{gather}
the complexity of the discrete-time quantum walk must be
\begin{gather}
    \tau \geq \max \left\{\norm{H}t\sqrt{\frac{1+(\frac{\pi}{2}-1)\norm{H}t}{\delta}},\, \norm{\text{abs}(H)}t\right\}.
    \label{eqn:cplx}
\end{gather}
In other words, the complexity is $O((\norm{H}t)^{3/2}/\sqrt{\delta}, \norm{\text{abs}(H)}t)$, where the comma denotes the maximum of the two terms.
\par \
\begin{theorem}\label{thm:c}
Every $N\times N$ $H$-local unitary acting on a graph with more edges than vertices can be approximated with error $\delta$ using $O(1/\sqrt{\delta},\sqrt{N})$ $C$-local unitaries.
\end{theorem}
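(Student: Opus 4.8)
The plan is to run an arbitrary $H$-local unitary through the correspondence of Section~\ref{correspondence} and then apply Theorem~\ref{thm:q} to realize the resulting discrete-time walk with $C$-local operations, so that the whole problem reduces to bounding the step count $\tau$ of (\ref{eqn:cplx}) purely in terms of $N$ and $\delta$. First I would take the given $H$-local unitary $U=e^{-iHt}$; by the definition of $H$-locality the Hermitian matrix $H$ is $Z$-local and has eigenvalues of absolute value at most $\pi$, so $\norm{H}\le\pi$, and $t$ is a fixed constant. Since $H$ is exactly the kind of $Z$-local Hamiltonian that the correspondence accepts, (\ref{eqn:cplx}) produces a discrete-time walk of length $\tau=O(\max\{(\norm{H}t)^{3/2}/\sqrt{\delta},\,\norm{\text{abs}(H)}t\})$ whose projected action approximates $e^{-iHt}$ to error $\delta$ on every input state. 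By Theorem~\ref{thm:q} this $\tau$-step walk is implemented with $O(\tau)$ $C$-local operations, so it remains only to bound the two terms inside the maximum.

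For the first term I would use the energy bound directly: with $\norm{H}\le\pi$ and $t$ fixed we have $\norm{H}t=O(1)$, hence $(\norm{H}t)^{3/2}/\sqrt{\delta}=O(1/\sqrt{\delta})$. The second term is controlled by a Frobenius-norm estimate of $\text{abs}(H)$: since $\text{abs}(H)$ shares the entrywise magnitudes of $H$, I would bound $\norm{\text{abs}(H)}\le\norm{\text{abs}(H)}_F=\norm{H}_F=\sqrt{\sum_k\lambda_k^2}\le\sqrt{N}\,\norm{H}\le\pi\sqrt{N}$, using that $H$ has $N$ eigenvalues each of absolute value at most $\pi$. Thus $\norm{\text{abs}(H)}t=O(\sqrt{N})$, and combining the two estimates gives $\tau=O(\max\{1/\sqrt{\delta},\sqrt{N}\})$, which is exactly the claimed number of $C$-local unitaries.

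The step I expect to be the main obstacle is confirming that the simulation can genuinely be carried out $C$-locally on the given graph, and this is where the hypothesis that the graph has more edges than vertices enters. The error-reducing lazy walk requires, for each vertex $j$, an auxiliary state $\ket{\bot_j}$ orthogonal to every $\ket{\psi_k}$ and every $S\ket{\psi_k}$ and satisfying $\braket{\bot_j|S|\bot_k}=0$; for the operations in (\ref{eqn:q}) to remain $C$-local these states must be supported on the edge subspace $\text{span}\{\ket{j,k}:(j,k)\in E\}$ rather than on all of $\mathbb{C}^N\times\mathbb{C}^N$. I would therefore verify that the dimension of this edge subspace, which grows with the number of edges, is large enough to accommodate the families $\{\ket{\psi_j}\}$ and $\{\ket{\bot_j}\}$ together with the image of $S$ acting on them; the assumption of more edges than vertices is what supplies this room, so that for a graph with too few edges no such $C$-local construction need exist. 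A final point to record, already contained in Theorem~\ref{thm:q}, is that the conditional swap $Q$ coincides with $S$ on the edge subspace, so replacing $S$ by $Q$ leaves the projected dynamics, and hence the error bound $\delta$, unchanged.
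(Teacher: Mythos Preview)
Your overall strategy matches the paper's: feed the $H$-local unitary through Childs' correspondence, bound $\tau$ via (\ref{eqn:cplx}), and invoke Theorem~\ref{thm:q}. Your Frobenius-norm derivation of $\norm{\text{abs}(H)}\le\sqrt{N}\norm{H}$ is correct and in fact more self-contained than the paper, which simply cites this inequality; the resulting estimates $O(1/\sqrt{\delta})$ and $O(\sqrt{N})$ then follow exactly as you say.

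The gap is in the step you correctly flag as the main obstacle. Theorem~\ref{thm:q} covers only the \emph{non-lazy} walk, whereas the lazy-walk parameter $\epsilon$ is what makes (\ref{eqn:cplx}) available, so you must construct the states $\ket{\bot_j}$ in a way that keeps every operator in (\ref{eqn:q}) $C$-local. A dimension count inside the edge subspace of $\mathbb{C}^N\times\mathbb{C}^N$ is not sufficient: even when there is room for $N$ extra vectors orthogonal to $\{\ket{\psi_k}\}$ and $\{S\ket{\psi_k}\}$, a generic such family will \emph{not} satisfy the further constraint $\braket{\bot_j|S|\bot_k}=0$ for all $j,k$, because $S$ is an involution that maps the new vectors among themselves; and if instead you try single basis vectors $\ket{\bot_j}=\ket{j,k_j}$ you lose orthogonality to $\ket{\psi_j}$ whenever $H_{jk_j}\neq 0$. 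The paper resolves this by adjoining one ancilla qubit, working in $\mathbb{C}^N\times\mathbb{C}^N\times\mathbb{C}^2$, placing each $\ket{\psi_j'}$ in the $\ket{\cdot,\cdot,0}$ sector and setting $\ket{\bot_j'}=\ket{j,k_j,1}$. The hypothesis ``more edges than vertices'' then enters not as a dimension bound but as exactly what is needed to choose neighbours $j\mapsto k_j$ with all the undirected edges $\{j,k_j\}$ distinct, which is precisely the condition that forces $\braket{\bot_j'|S'|\bot_k'}=0$.
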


\begin{proof}
In order to obtain an arbitrarily good approximation without violating locality we must now modify the \textit{lazy quantum walk} provided by Childs. Specifically, to construct the states $\ket{\bot_j}$ Childs enlarged the Hilbert space from $\mathbb{C}^N\times\mathbb{C}^N$ to $\mathbb{C}^{N+1}\times\mathbb{C}^{N+1}$ and chose $\ket{\bot_j} = \ket{j,N+1}$ for simplicity. In terms of our graph structure and the swap operator, this would correspond to adding a vertex to our graph which is connected to every other vertex, which obviously violates locality in our model. Instead, if the number of edges in our graph is strictly greater than the number of vertices (i.e. our graph is not a tree), then we can construct a lazy walk which is local by extending our Hilbert space to $\mathbb{C}^N \times \mathbb{C}^N \times \mathbb{C}^2$. Redefine the states $\ket{\psi_j}$ to be
\begin{gather}
    \ket{\psi_j'} = \frac{1}{\sqrt{\norm{\text{abs}(H)}}}\sum_{j=1}^N \sqrt{H_{jk}^*\frac{d_k}{d_j}}\ket{j,k,0}.
\end{gather}

Let our set of orthogonal states be $\ket{\bot_j'} = \ket{j,k,1}$, where for each vertex $j$, we pick a connected vertex $k$ such that no edge $(j,k)$ is picked twice. With $S'$ swapping the first two registers,
\begin{gather}
    \braket{\psi_j'|\bot_k'} = \braket{\psi_j'|S'|\bot_k'} = \braket{\bot_j'|S'|\bot_k'} = 0\text{ for } j,k = 1,...,N.
\end{gather}
Then the fact $\braket{\psi_j'^\epsilon|S'|\psi_k'^\epsilon} = \epsilon\braket{\psi_j'|S'|\psi_k'}$ allows $h$ in (\ref{eqn:approx}) to go to $\epsilon h$ without violating locality. Replace $\ket{\psi_j^\epsilon}$ by
\begin{gather}
    \ket{\psi_j'^\epsilon} = \sqrt{\epsilon}\ket{\psi_j'} + \sqrt{1 - \epsilon}\ket{\bot_j'}.
\end{gather}
and $T_\epsilon$ by $T_\epsilon' = \sum_j\ket{\psi_j'^\epsilon}\bra{j}$. All these states only contain basis states $\ket{j,k,*}$ with $(j,k)\in E$ so $Q$ has the same action as $S$, and the discrete-time quantum walk remains $C$-local.

Recall that $H$ is Hermitian with bounded eigenvalues, so $\norm{H}\leq \pi$. For a matrix $H \in \mathbb{C}^{N\times N}$, the best possible bounds \cite{absh} for $\norm{\text{abs}(H)}$ is
\begin{gather}
    \norm{H} \leq \norm{\text{abs}(H)} \leq \sqrt{N}\norm{H}.
\end{gather}
Then since $e^{-iHt}$ is applied for a fixed amount of time, the complexity of the walk given by (\ref{eqn:cplx}) reduces to that of Theorem \ref{thm:c}.
\end{proof}

In some cases, (\ref{eqn:cplx}) may be overly pessimistic \cite{childs2010relationship}, and one might wonder if the $\sqrt{N}$ in Theorem \ref{thm:c} could be reduced to polylog($N$). However, there exists a family of Hamiltonians with $\norm{\text{abs}(H)} \gg \norm{H}$ which cannot be simulated in time $\text{poly}(\norm{H}t, \log N)$ \cite{kothari2010}, so the $\sqrt{N}$ cannot be reduced in general.
\par
In summary, this result provides a method for converting a continuous-time quantum walk into a discrete-time quantum walk without violating locality. Alternatively, it can be used as a method for simulating a locally acting Hamiltonian using only local unitary operations. The compilation of relationships between $Z$-local, $C$-local, and $H$-local unitaries are shown in the Figure \ref{fig:tri}, where $*$ means by approximation. 
\begin{figure}[H]
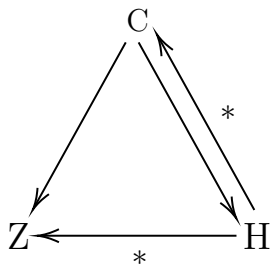

    \centering
    \includestandalone[scale=1]{figures/diagram}
    \caption{Relationships between $Z$-local, $C$-local, and $H$-local criteria.}
    \label{fig:tri}
\end{figure}

\section{Acknowledgements}
I would like to thank Dr. Scott Aaronson and Patrick Rall for useful discussions and ideas on relating these three locality criteria.

\bibliographystyle{plain}
\bibliography{sources}

\section{Appendix}\label{Appendix}
\subsection{Proof of Theorem \ref{thm:q}}
\begin{lemma}
Let $V = iQ(2TT^\dagger - 1)$ and $W = iS(2TT^\dagger - 1)$. For even integers of $\tau$, $V^{2\tau} = W^{2\tau}$.
\end{lemma}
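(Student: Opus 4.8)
The plan is to exploit the fact that $Q$ and $S$ differ only on the ``non-edge'' part of the space, and that the reflection $2TT^\dagger-1$ leaves that part invariant, so that $V$ and $W$ decouple into two blocks which can be compared separately. First I would introduce the edge subspace $\mathcal{E}=\mathrm{span}\{\ket{j,k}\,:\,(j,k)\in E\}$ together with its orthogonal complement $\mathcal{E}^\perp=\mathrm{span}\{\ket{j,k}\,:\,(j,k)\notin E\}$. The structural input is that each $\ket{\psi_j}$ lies entirely in $\mathcal{E}$: because $H$ is $Z$-local, the coefficient $\sqrt{H_{jk}^*\,d_k/d_j}$ vanishes whenever $(j,k)\notin E$. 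Hence the projector $P:=TT^\dagger$ onto $\mathrm{span}\{\ket{\psi_j}\}$ annihilates $\mathcal{E}^\perp$ and has range contained in $\mathcal{E}$, so the reflection $R:=2TT^\dagger-1$ is block diagonal for the splitting $\mathcal{E}\oplus\mathcal{E}^\perp$ and acts as $-I$ on $\mathcal{E}^\perp$.

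Next I would record the action of $Q$ and $S$ on the two blocks. Since $(j,k)\in E$ iff $(k,j)\in E$, both $Q$ and $S$ preserve $\mathcal{E}$ and $\mathcal{E}^\perp$. On $\mathcal{E}$ they are identical (both swap the two registers), so $V=iQR$ and $W=iSR$ agree on $\mathcal{E}$, and thus $V^{2\tau}=W^{2\tau}$ there automatically. On $\mathcal{E}^\perp$, however, $Q$ is the identity while $S$ is the swap; combined with $R|_{\mathcal{E}^\perp}=-I$ this gives $V|_{\mathcal{E}^\perp}=-iI$ and $W|_{\mathcal{E}^\perp}=-iS$.

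The conclusion then follows from the involution property $S^2=I$. Restricting to $\mathcal{E}^\perp$ and raising to the (even) power $2\tau$ gives $V^{2\tau}=(-i)^{2\tau}I=(-1)^\tau I$ and $W^{2\tau}=(-i)^{2\tau}S^{2\tau}=(-1)^\tau(S^2)^\tau=(-1)^\tau I$, so the two blocks coincide; together with agreement on $\mathcal{E}$ this yields $V^{2\tau}=W^{2\tau}$ on the whole space. I expect the only delicate step to be verifying the block structure of $R$, i.e.\ that $P$ annihilates $\mathcal{E}^\perp$ and stabilizes $\mathcal{E}$, which is exactly where $Z$-locality of $H$ enters; everything afterward is a one-line computation. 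This viewpoint also explains why an even exponent is needed: on $\mathcal{E}^\perp$ an odd power of $W$ retains a factor $S\neq I$, so $V^{n}$ and $W^{n}$ genuinely differ for odd $n$, whereas any even exponent (in particular $2\tau$ for every integer $\tau$) washes out the swap.
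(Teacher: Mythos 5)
Your proof is correct, and it rests on exactly the same two structural facts as the paper's: $Z$-locality of $H$ forces each $\ket{\psi_j}$ into the edge subspace (so the reflection acts as $-I$ on non-edge states), and $Q$ agrees with $S$ on edge states. The difference is in how these facts are deployed. The paper verifies the identity $V^2 = W^2$ basis state by basis state, splitting into the cases $(j,k)\notin E$ and $(j,k)\in E$, and then raises to the power $\tau$; your two blocks $\mathcal{E}$ and $\mathcal{E}^\perp$ are precisely those two cases, but you promote them to invariant subspaces and compare the restricted operators globally. This buys two things. First, it makes rigorous what the paper leaves informal: in its Case 2 the paper asserts that ``the second iteration of the walk operator acts in the same manner,'' which is really the statement that $\mathcal{E}$ is invariant under $R = 2TT^\dagger - 1$, $Q$, and $S$ simultaneously --- exactly what your block structure establishes, and from which you get the stronger conclusion $V|_{\mathcal{E}} = W|_{\mathcal{E}}$ (all powers agree there, not just even ones). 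Second, your computation $V|_{\mathcal{E}^\perp} = -iI$ versus $W|_{\mathcal{E}^\perp} = -iS$ isolates cleanly why the even exponent is needed: odd powers of $W$ retain a stray factor of $S$ on the non-edge block, so $V^n \neq W^n$ for odd $n$ on any non-complete graph, whereas even powers wash it out via $S^2 = I$. The paper's route is more elementary (pure basis-state arithmetic) but proves only $V^2 = W^2$ and leans on an unproved invariance claim; yours is slightly more abstract but self-contained and explanatory.
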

\begin{proof}
Equivalently we can show that $\forall j,k\in \{1,...,N\}$, $(V^2 - W^2)\ket{j,k} = 0$, using the fact that if $V^2 = W^2$, then $(V^2)^\tau = (W^2)^\tau$. \par
\ \\
\indent\textbf{Case 1:} $(j,k) \notin E$. In this case $S$ and $Q$ have a different action on the basis states $\ket{j,k}$, but $S^2\ket{j,k} = Q^2\ket{j,k}$. Applying one iteration of the reflection about $\text{span}\{\ket{\psi_j}\}$,
\begin{gather}
    (2TT^\dagger - 1)\ket{j,k} = 2\sum_{l=1}^N\ket{\psi_l}\braket{\psi_l|j,k} - \ket{j,k} \\ = 2\sqrt{\frac{H_{jk}\frac{d_j}{d_k}}{\norm{\text{abs}(H)}}}\ket{\psi_j} - \ket{j,k} = -\ket{j,k}
    \label{eqn:mid2}
\end{gather}
The last equality is due to the fact that our governing Hamiltonian is $Z$-local, so if $(j,k)\notin E$ then $H_{jk} = 0$. For notational convenience let $A = 2TT^\dagger - 1$. For $V^2$,
\begin{gather*}
    V^2\ket{j,k} = (i^2)QAQA\ket{j,k} = (i^2)(-1)QAQ\ket{j,k} = (i^2)(-1)QA\ket{j,k} \\= (i^2)Q\ket{j,k} = (i^2)\ket{j,k}.
\end{gather*}
For $W^2$,
\begin{gather*}
    W^2\ket{j,k} = (i^2)SASA\ket{j,k} = (i^2)(-1)SAS\ket{j,k} = (i^2)(-1)SA\ket{k,j} \\= (i^2)S\ket{k,j} = (i^2)\ket{j,k}.
\end{gather*}
Therefore in the case $(j,k)\notin E$, $(V^2 - W^2)\ket{j,k} = 0$.
\ \\ \ \\
\indent\textbf{Case 2:} $(j,k) \in E$. Continuing from (\ref{eqn:mid2}), except now $H_{jk} \neq 0$. Expanding the state $\ket{\psi_j}$, we note that $\braket{j,k|\psi_j} = 0$ whenever $(j,k)\notin E$. Therefore our state is a sum over basis states $\ket{j,k}$ for which $(j,k)\in E$, so the application of $Q$ is identical to $S$ on this state. The second iteration of the walk operator acts in the same manner, therefore $(V^2 - W^2)\ket{j,k} = 0$.
\end{proof}

\begin{lemma}
$\forall j\in \{1,...,N\}$, $QT\ket{j} = ST\ket{j}$.
\end{lemma}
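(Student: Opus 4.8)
The plan is to exploit the fact that the state $T\ket{j} = \ket{\psi_j}$ is supported only on basis states $\ket{j,k}$ corresponding to edges of the graph, so that the conditional swap $Q$ cannot be distinguished from the full swap $S$ on this state. First I would recall that, by definition of the isometry, $T\ket{j} = \ket{\psi_j} = \frac{1}{\sqrt{\norm{\text{abs}(H)}}}\sum_{k=1}^N \sqrt{H_{jk}^*\frac{d_k}{d_j}}\,\ket{j,k}$, so that the entire question reduces to comparing $Q\ket{\psi_j}$ with $S\ket{\psi_j}$.

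Next I would invoke the $Z$-locality of the governing Hamiltonian. By hypothesis $H$ is $Z$-local, so $H_{jk} = 0$ whenever $(j,k)\notin E$; consequently every term in the sum defining $\ket{\psi_j}$ that carries a nonzero coefficient has $(j,k)\in E$. Equivalently, $\braket{j,k|\psi_j} = 0$ for all $(j,k)\notin E$, exactly as was already noted in the proof of the previous lemma.

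The final step compares the two operators term by term on the support of $\ket{\psi_j}$. For each surviving basis state $\ket{j,k}$ we have $(j,k)\in E$, and on such a state the conditional swap performs the genuine swap, $Q\ket{j,k} = \ket{k,j} = S\ket{j,k}$, directly from the definition of $Q$. Since $Q$ and $S$ are linear and agree on every basis state appearing with nonzero amplitude in $\ket{\psi_j}$, they agree on $\ket{\psi_j}$ itself, yielding $QT\ket{j} = Q\ket{\psi_j} = S\ket{\psi_j} = ST\ket{j}$.

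I do not anticipate a genuine obstacle here: the whole content is that the amplitudes of $\ket{\psi_j}$ vanish off the edge set, which is immediate from $Z$-locality, after which the conditional and unconditional swaps coincide on edges by construction. The only point demanding minor care is the index bookkeeping in the definition of $\ket{\psi_j}$ (the outer label $j$ versus the summation variable $k$), but this does not affect the argument. Together with the preceding lemma on the bulk iterations, this identity is what lets the flanking factors $\frac{1\pm iQ}{\sqrt{2}}$ be exchanged for $\frac{1\pm iS}{\sqrt{2}}$ in~(\ref{eqn:q}).
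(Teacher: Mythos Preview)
Your proof is correct and follows exactly the same approach as the paper: observe that $T\ket{j}=\ket{\psi_j}$, use the $Z$-locality of $H$ to conclude that $\ket{\psi_j}$ is supported only on basis states $\ket{j,k}$ with $(j,k)\in E$, and then note that $Q$ and $S$ agree on such states. The only difference is that you spell out the linearity step and the chain of equalities explicitly, whereas the paper compresses this into a single sentence.
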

\begin{proof}
$T\ket{j} = \ket{\psi_j}$, and if $(j,k)\notin E$ then $H_{jk}$ is 0, so $\ket{\psi_j}$ contains no states $\ket{j,k}$ for which $(j,k)\notin E$. Therefore $Q$ and $S$ have the same action on these states.
\end{proof}

\begin{lemma}
$TT^\dagger$ is $C$-local.
\end{lemma}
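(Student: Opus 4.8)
The plan is to exploit the fact that each $\ket{\psi_j}$ lives entirely in the subspace where the first (vertex) register is pinned to $\ket{j}$. Reading off the definition, $\ket{\psi_j} = \frac{1}{\sqrt{\norm{\text{abs}(H)}}}\sum_{k}\sqrt{H_{jk}^*\,d_k/d_j}\,\ket{j,k} = \ket{j}\otimes\ket{\phi_j}$, where $\ket{\phi_j}$ is a coin-register state supported only on those $k$ with $(j,k)\in E$ (since $H$ is $Z$-local, $H_{jk}=0$ off the edge set). First I would record that, because the first registers of $\ket{\psi_j}$ and $\ket{\psi_{j'}}$ are orthogonal for $j\neq j'$, the family $\{\ket{\psi_j}\}$ is orthonormal — the normalization of each $\ket{\phi_j}$ being immediate from the principal-eigenvector equation $\text{abs}(H)\ket{d}=\norm{\text{abs}(H)}\ket{d}$ — and hence $TT^\dagger=\sum_j\ket{\psi_j}\bra{\psi_j}=\sum_j\ket{j}\bra{j}\otimes\ket{\phi_j}\bra{\phi_j}$.

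The crucial structural consequence is that $TT^\dagger$ is block diagonal in the vertex register: $\braket{j',k'|TT^\dagger|j,k}=\delta_{jj'}\braket{k'|\phi_j}\braket{\phi_j|k}$, which vanishes unless $j=j'$. I would therefore take the $C$-local partition to be $P_j=\{\ket{j,k}\mid k=1,\dots,N\}$, grouping basis states by their vertex. Condition (i) holds because matrix elements between distinct $P_j$'s vanish by block diagonality, and condition (ii) holds trivially since every state in $P_j$ sits at the single vertex $j$; note the coin register may be acted on arbitrarily within a block, so no adjacency constraint on $k$ is required.

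The one point that needs care — and the only real obstacle — is that $TT^\dagger$ is a projector rather than a unitary, whereas $C$-locality is defined for unitary operators. The content of the lemma is thus that $TT^\dagger$ respects the partition above, which is precisely what makes the reflection $2TT^\dagger-1$ appearing in the walk a $C$-local \emph{unitary}: with the same partition, $2TT^\dagger-1=\sum_j\ket{j}\bra{j}\otimes(2\ket{\phi_j}\bra{\phi_j}-I)$, where each block $2\ket{\phi_j}\bra{\phi_j}-I$ is a genuine reflection (hence unitary) acting within the single-vertex set $P_j$, and distinct blocks commute because they have orthogonal support. I would close by noting that the only computation involved is verifying the orthonormality of the $\ket{\psi_j}$, so that $TT^\dagger$ really is the stated projector, and this follows at once from the disjoint first registers.
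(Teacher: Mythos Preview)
Your proposal is correct and follows essentially the same approach as the paper: both exploit that $\ket{\psi_j}$ has first register pinned to $\ket j$, so that $TT^\dagger$ is block diagonal in the vertex register, and both take the partition $P_j=\{\ket{j,k}\mid k=1,\dots,N\}$. Your tensor-product phrasing $TT^\dagger=\sum_j\ket{j}\bra{j}\otimes\ket{\phi_j}\bra{\phi_j}$ is a cleaner packaging of the paper's direct matrix-element computation $\sum_l \braket{j,k|\psi_l}\braket{\psi_l|j^*,k^*}\propto\delta_{jl}\delta_{j^*l}$, and your explicit remark that the genuine $C$-local \emph{unitary} is the reflection $2TT^\dagger-1$ is a point the paper makes only in passing.
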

\begin{proof}
$\forall j,j^*,k,k^* \in \{1,...,N\}$ where $j\neq j^*$,
\begin{gather}
    \braket{j,k|TT^\dagger|j^*,k^*} = \sum_{l=1}^N \braket{j,k|\psi_l}\braket{\psi_l|j^*,k^*} \\
    =  \frac{1}{\norm{\text{abs}(H)}} \sum_{l=1}^N \sqrt{H_{jk}^* H_{j^* k^*}\frac{d_k d_{k^*}}{d_j d_{j^*}}}\delta_{jl}\delta_{j^*l} = 0.
    \label{maineq}
\end{gather}
Therefore the operation $TT^\dagger$ is $C$-$local$ because we can partition the basis states into sets $P_j = \{\ket{j,k}\, |\, \forall k \in \{1,...,N\}\}$. Furthermore multiplying by 2 and subtracting the identity does not change this, so $2TT^\dagger - 1$ is $C$-$local$.
\end{proof}
The operation $Q$ is $C$-local by definition, and $T$ is also $C$-local in the sense that for $(j,j^*)\notin E$, $\braket{j^*,k|T|j} = \braket{j^*,k|\psi_j} = 0$. So for even powers of $\tau$ the two walks are equivalent:

\begin{gather}
    T^\dagger \frac{1-iS}{\sqrt{2}}(iS(2TT^\dagger - 1))^\tau \frac{1+iS}{\sqrt{2}}T = T^\dagger \frac{1-iQ}{\sqrt{2}}(iQ(2TT^\dagger - 1))^\tau \frac{1+iQ}{\sqrt{2}}T.
\end{gather}
\end{document}